\definecolor{c30}{rgb}{0,.7,.2}
\definecolor{d30}{rgb}{1,0,0}
\newcommand{\bea}{\begin{eqnarray}}
\newcommand{\eea}{\end{eqnarray}}
\newcommand{\be}{\begin{equation}}
\newcommand{\ee}{\end{equation}}
\renewcommand{\t}{\tau}
\newcommand{\tb}{\bar{\tau}}
\newtheorem{Theorem}{Theorem}[section]
\newtheorem{Definition}[Theorem]{Definition}
\newtheorem{Corollary}[Theorem]{Corollary}
\newtheorem{Lemma}[Theorem]{Lemma}
\begin{document}
\begin{small}\title{Three Dimensional Pure Gravity and Generalized Hecke Operators  }
\author{
M. Ashrafi\thanks{E-mail address: mashrafi@ucdavis.edu}\\[6pt]
 Department of Physics, University of California Davis \\One Shields Avenue, Davis, California 95616, USA }
\date{}
\maketitle
\begin{abstract}
In this paper, we study mathematical functions of relevance to pure gravity in AdS3. Modular covariance places stringent constraints on the space of such functions; modular invariance places even stronger constraints on how they may be combined into physically viable candidate partition functions. We explicitly detail the list of holomorphic and anti-holomorphic functions that serve as candidates for chiral and anti-chiral partition functions and note that modular covariance is only consistent with such functions when the left (resp. right) central charge is an integer multiple of 8, $c\in 8\mathbb{N}$. We then find related constraints on the symmetry group of the corresponding topological, Chern-Simons, theory in the bulk of AdS. The symmetry group of the theory can be one of two choices: either $SO(2; 1) \times SO(2; 1)$ or its three-fold diagonal cover. We introduce the generalized Hecke operators which map the modular covariant functions to the modular covariant functions. With these mathematical results, we obtain conjectural partition functions for extremal CFT2s, and the corresponding microcanonical entropies, when the chiral central charges are multiples of eight. Finally, we compute subleading corrections to the Beckenstein-Hawking entropy in the bulk gravitational theory with these conjectural partition functions
\end{abstract}
\end {small}
\section{Intoduction}
In three dimensional classical Einstein gravity, every two solutions are locally isometric and there is no propagating degrees of freedom. In 3d gravity with negative cosmological constant (AdS gravity), the existence of the BTZ black hole \cite{Banados:1992wn, Banados:1992gq} makes the theory more interesting to consider this theory as a toy model to understand the higher dimensional gravity \cite{Deser:1983tn}-\cite{Grumiller:2019ygj}.\\
  
 Three dimensional AdS gravity has asymptotic Virasoro symmetry. During the quantization of this theory, Brown-Henneaux showed that the theory has left and right moving Virasoro algebras which are  part of the structure of the conformal field theory \cite{Brown:1986nw}. The corresponding central charge is $c=\frac{3l}{2g}$ (where $l$ is the length of the AdS space).  This shows the existence of the boundary conformal field theory. This duality is an example of the AdS/CFT correspondence, which  is a correspondence between a bulk gravity, and boundary CFT in the lower dimension \cite{Maldacena:1997re}-\cite{Witten:1998qj}.
 
 Using the AdS/CFT dictionary,  one can obtain useful informations about the bulk theory by studying the boundary CFT. Solvability of the 3d gravity and the AdS/CFT correspondence, makes this theory more powerful to reveal some fundamental aspects of the quantum gravity\cite{Afkhami-Jeddi:2019zci}-\cite{Harvey:2020jvu}.

We  investigate the minimal theory of 3d gravity in this paper. It is believed that the 3d pure gravity is dual to the holomorphically factorizable extremal 2d CFT on the boundary, where there are some constraints on the central charges of the CFT \cite{witten}.  These constraints on the dual CFT come from the equivalence between 3d Einstein gravity and Chern-Simons gauge theory \cite{achucarro1986, witten1988cs}. 

In three dimension, the Einstein-Hilbert action with negative cosmological constant can be expressed in terms of the gauge invariant action as follows:
\begin{align}\label{E5}
 I=\frac{k}{4\pi}\int  {\rm Tr}^*\,(A\wedge dA+\frac{2}{3}A \wedge A \wedge A),
 \end{align}
where the gauge field $A$ is built from $SO(2,1)$ gauge field $w$(spin connection), and verbin $e$. The gauge group of this action is $SO(2,2)$. The gauge group $SO(2,2)$ is locally equivalent to $SO(2,1)\times SO(2,1)$. Therefore, in the oriented space-time the action can be written as:
\begin{align}\label{E9}
I&=\frac{k_L}{4 \pi}\int {\rm Tr}\, (A_L \wedge dA_L + \frac{2}{3}A_L \wedge A_L \wedge A_L) - \frac{k_R}{4 \pi} \int {\rm Tr}\,(A_R\wedge dA_R + \frac{2}{3} A_R\wedge A_R \wedge A_R)\\\nonumber
&=k_L I_L + k_RI_R.
\end{align}
where $k_L$ and $k_R$, are  Chern-Simons couplings.  The allowed values of the  $k_L$ and $k_R$, can be calculated from the quantization of the Chern-Simons couplings. Since the fundamental group of the gauge group $SO(2,1)\times SO(2,1)$ is $U(1)\times U(1)$, every diagonal covering groups of  this group can be considered as the gauge group of the Chern-Simons theory. For an n-fold diagonal  cover of $SO(2,1)\times SO(2,1)$, the quantization condition of the $k_L$ and $k_R$ is obtained as follows
\begin{equation}\label{klquanta}
k_L \in \bigg \{ \begin{array}{ll}
n^{-1} \mathbb{Z} ~~~~~~n \in 2\mathbb{Z}+1\\
(2n)^{-1}\mathbb{Z}~~~~~~n \in 2 \mathbb{Z}
\end{array} 
\end{equation}
\begin{equation}\label{kdifquanta}
k_L-k_R \in \mathbb{Z}
\end{equation}
Using the AdS/CFT dictionary, equation \eqref{E9} shows  the partition function of the dual CFT should be holomorphically factorizable:
\begin{equation}
Z(\tau,\bar{\tau})=Z(\tau)\bar{Z}(\bar{\tau})
\end{equation}
Where $Z(\tau)$ and $\bar{Z}(\bar{\tau})$ are chiral and anti chiral characters of the partition function which we called them as chiral and anti chiral functions respectively. For $c_L$ ( $c_R$ )integer multiple of 24, the chiral function  (anti-chiral function)  itself is a modular invariant partition function, and it can be  considered a corresponding purely chiral conformal field theory. Otherwise chiral function  is just the characters of the partition function \cite{flm1}.

The quantization condition \eqref{klquanta} leads to the following quantization condition on the left and right central charges:
\begin{equation}
(c_L,c_R)=(24k_L, 24k_R)
\end{equation}
 Quantization condition \eqref{kdifquanta} is equivalent to the $T$ invariant constraint  of the partition function of the dual CFT \cite{witten}.

  The purity of the 3d gravity  for the dual CFT  means that the primary fields of low dimensions should come from the identity. Therefore, the scaling dimension of the lowest primary fields excluding identity should be $k+1$. This class of CFTs is called extremal CFT \cite{hon1,hon2}. The vacuum state of the extremal CFT corresponds to the AdS space and the other primary fields  correspond to the BTZ black hole. \\
  Now, we are ready to solve the pure gravity. In order to achieve this goal, we need to find the dual CFT. First step toward obtaining the partition function of the extremal CFT is  determining the gauge group of the Chern-Simons gravity.  As   \eqref{klquanta} shows, the simplest gauge group is $SO(2,1)\times SO(2,1)$. For this gauge group, $k_L$ and $k_R$ are integer numbers and the dual extremal CFT satisfies all of the constraints. The left and right central charges are integer multiple of 24 and the chiral and anti-chiral  functions are modular invariant. 

Modular invariance  is a powerful tool that reveals interesting aspects of the conformal field theory \cite{heller1}-\cite{Cheng:2020srs}. Modularity determines the partition function precisely. Every modular function can be expressed in terms of the polynomial  of the Klein J function:
\begin{align}\label{wit4}
Z(q)=\sum_{r=0}^{k}f_rJ^r.
\end{align}
 For $c_L=24,48$, and $72,$  the partition function of the extremal CFT is calculated as follows:
\begin{align}\label{wit5}
Z_1(q)&=J(q)=q^{-1}+196884q+\cdots\\\nonumber
Z_2(q)&=J(q)^2-393767\\\nonumber
&=q^{-2}+1+42987520 q+\cdots\\\nonumber
Z_3(q)&=J(q)^3-590651J(q)-64481279\\\nonumber
&=q^{-3}+q^{-1}+1+2593096794 q+\cdots.
\end{align}
The closed-form of the extremal partition function is derived using the Hecke operators
 \begin{align}\label{He2}
T_nf(\tau)=\sum_{d|n}\sum_{b=0}^{d-1}f\left(\frac{n\tau +bd}{d^2}\right).
\end{align}
The Hecke operators map the modular functions to the modular functions and for $f(\tau)=q^{-1}+\mathcal{O}(q)$;
 \begin{align}\label{He2}
T_nf(\tau)=q^{-n}+\mathcal{O}(q).
\end{align}
Therefore, the partition function of external CFT is obtained as follows \cite{appostal}\footnote{One can also write the closed-form of the partition function in terms of the unique modular function $J_m(\tau)$ which only has an order-m pole at $q=0$ \cite{McGady:2018rmo}:
\begin{align}
Z_k(\tau) := J_k(\tau) + \sum_{m = 0}^{k-1}  \{ p(k-m) - p(k-m-1) \}J_m(\tau),
\end{align}
where $p(m)$ is the partition number.}:

\begin{align}\label{wit6}
Z_k(\tau)=\sum_{r=0	}^{k}a_{-r}T_rJ(\tau).
\end{align}
Where,  $a_{-r}$ are the coefficients of the low states of the vacuum. For $c_L=24$, it is believe that  there exist $71$ holomporphic CFT \cite{flm1}.  70 of these holomorphic CFTs have  Kac-Moody or current algebra symmetry. Therefore, they can not be a candidate for pure gravity. The partition function $Z_1$ is one of the 71 holomorphic CFTs with no Kac-Moody symmetry. This model first was constructed by Frenkel, Lepowsky, and Meurman and its uniqueness was conjectured  \cite{ flm2}. For $c>24$ it is not know whether such CFTs exist or not. Although  the existence of extremal CFT were investigated for large value of $k$, but its existence  is still an open question \cite{gaber1, Gaberdiel:2008pr}. If such CFT's exist, they are good candidates for pure gravity.

In \cite{Maloney:2007ud, Keller:2014xba}, the authors attempted to compute the partition function of the pure gravity form different point of view by summing over the known saddles in the Euclidean gravitational path integral. In \cite{Alday:2019vdr}, using the Rademacher expansiosn  the partition function of the pure gravity have been calculated. The resulting partition function is an interesting modification of the partition function in \cite{Maloney:2007ud, Keller:2014xba}. Their results showed some unphysical features  (e.g., the negativity of the density of states at special values of the primary fields). Despite the attempts to solve these issues the  partition function of the pure gravity is still unknown \cite{Benjamin:2019stq}-\cite{Maxfield:2020ale}.

In \cite{witten}, Witten considered the $SO(2,1)\times SO(2,1)$ gauge group, now the question arise: ``Does any covering group of the gauge group $SO(2,1)\times SO(2,1)$ exists which satisfies all constraints on the dual CFT?'' 

In order to address this question, we  study the holomorphically  factorizable CFT  in this paper. We show that modular invariance  of    the holomorphically factorizable CFT  is  necessary and sufficient condition for deriving the allowed values of the covering group. In section \ref{s2}, we  study the modularity of the holomorphically factorizable partition functions and we  calculate the allowed covering group. From modular invariance of the holomorphically factorizable partition function we conclude that  the chiral  and the anti-chiral  functions are modular covariant. We explicitly detail the list of holomorphic and anti-holomorphic functions that serve as candidates for chiral and anti-chiral partition functions and note that modular covariance is only consistent with such functions when the left (resp. right) central charge is an integer multiple of 8, $c\in 8\mathbb{N}$.  Since the chiral and the anti-chiral  functions are not modular invariant, we can not use the Hecke operators. In section \ref{sec3}, we  introduce the generalized Hecke operators which map modular covariant functions into the modular covariant functions. We also investigate its Fourier expansion. In section \ref{sec4}, we  find related constraints on the symmetry group of the corresponding topological, Chern-Simons, theory in the bulk of AdS. We show that the symmetry group of the theory can be one of two choices: either $SO(2; 1) \times SO(2; 1)$ or its three-fold diagonal cover. In this section, we study the dual CFT for the case where the gauge group is three-fold diagonal cover of the group $SO(2,1)\times SO(2,1)$. we obtain conjectural partition functions for extremal CFT2s, and the corresponding microcanonical entropies, when the chiral central charges are multiples of eight.
 \section{Holomorphically Factorizable Partition Function}\label{s2}
 \subsection{Partition Function}\label{sub21}
 The partition function of the unitary 2d CFT in the upper half plane $\tau=\tau_r+i\tau_i$ ($\tb=\tau_r-i\tau_i$), is defined as 
  \begin{equation}
    \label{z1}
    Z(\tau,\bar{\tau})=q^{\frac{-c_L}{24}}{\bar q}^{\frac{-c_R}{24}}\sum_{h,\bar{h}=0}\rho(h,\bar{h})\,q^{h}{\bar q}^{\bar{h}},
    \end{equation}
  where, $c_L$ and $c_R$ are the left and right central charges and $\rho(h,\bar{h})$ is the density of the state.

The holomorphically factorizable partition functions can be written as the multiplications of the chiral and anti-chiral  functions as follows:
\be
Z(\t,\tb)=Z(\t)\bar{Z}(\tb),
\ee
where the chiral function $Z(\tau)$,  and the anti-chiral  function $\bar{Z}(\tb)$ are defined  as follows
\be\label{HF2}
Z(\t)=\sum_{h=0}\rho(h)e^{2\pi i\t(h-\frac{c_L}{24})},
\ee
\be\label{HF3}
\bar{Z}(\bar{\tau})=\sum_{\bar{h}=0}\rho(\bar{h})e^{-2\pi i\tb(\bar{h}-\frac{c_R}{24})}.
\ee
\subsection{Modularity of Partition Function}\label{sub22}
The modular covariance of  the chiral and anti-chiral  functions is the necessary and sufficient condition for the  modular invariance of the partition function $Z(\t ,\tb)$. Modular covariant  means that the chiral and anti-chiral  functions take  an overall phase under  the $S$ and $T$ transformations. 
 
Modular covariance of $Z(\t)$ under the $S$ transformation demands:
\be\label{HF4}
Z\left(\frac{-1}{\t}\right)=e^{i\beta}Z(\t).
\ee
The identity $S^2=1,$ shows  the phase $\beta$ should be $\pi$ or $2\pi$. 

For  $\beta=\pi$, we called the corresponding  function $Z^{-}(\t)$. The Fourier expansion of $Z^{-}(\t)$ under $S$ transformation at the self dual point $\t=i$ yields:
\be\label{HF5}
SZ^{-}(\t)\big\vert_{\t=i}=\sum_{h=0} \rho(h) e^{-2\pi(h-\frac{c_L}{24})}.
\ee
Using \eqref{HF4} for $\beta=\pi$, and  \eqref{HF5} show that $Z^{-}(\t=i)$ is equal to  zero. In Eq. \eqref{HF5} all phases are positive, so some of the density of states should be negative. Therefore, $Z^{-}(\t)$ is not a  physical function.
 
Covariance of $Z(\t)$ under $T$ transformation requires:
\be\label{HF6}
TZ(\t)=e^{-2\pi i\alpha}Z(\t).
\ee
Plugging \eqref{HF2} into \eqref{HF6}, for $\tau_r=0$ yields
\be\label{HF7}
\sum_{h=0}\rho(h){e}^{-2\pi\tau_i(h-\frac{c_L}{24})}\left(1-\cos2\pi (h-\frac{c_L}{24}+\alpha)\right)=0,
\ee
\be\label{HF8}
\sum_{h=0}\rho(h){e}^{-2\pi\tau_i(h-\frac{c_L}{24})}\sin2\pi \left(h-\frac{c_L}{24}+\alpha\right)=0.
\ee
The summands  \eqref{HF7} and \eqref{HF8} are non-negative. Therefore, $\left(h-\frac{c_L}{24}+\alpha\right)$ should be integer. The vacuum state $(h=0)$, requires that
\be\label{alpha}
 \alpha=\frac{c_L}{24},
 \ee
Therefore, 
\be\label{h}
h\in\mathbb{N}.
\ee
Using $(ST)^3=1$ and  invariance of $Z(\t)$ under $S$ transformation one can obtain:
\be\label{HF10}
e^{-2\pi i\frac{c_L}{8}}=1.
\ee
Consequently;
\be\label{c8}
c_L=8m_L,\quad m_L\in\mathbb{N}
\ee
 Similarly, for $\bar{Z}(\tb)$ we have:
\begin{eqnarray}\label{HFC2}
\bar{h}\in\mathbb{N}\quad\quad c_R=8m_R,\quad m_R\in\mathbb{N}.
\end{eqnarray}
 For $m_L, m_R \notin3\mathbb{N}$ modular invariance of the partition function $Z(\t,\tb)$, enforces that $m_L=m_R=k$. In this case, the partition function $Z(\t,\tb)$ automatically becomes real. For $m_L, m_R \in3\mathbb{N}$,  if we put the weak condition of the reality of the partition function, this constrains leads to the equality of the right and left central charges.
\subsection{The Basis for the Modular Covariant  Functions}\label{sub32}
In this section we  derive the basis for $Z(\t)$. For $m_L$ integer multiple of  three, i.e. $c_L\in 24\mathbb Z$, $Z(\t)$ is modular invariant.  Therefore, it is a polynomial  in terms of the Klein function $J(\t)$ \cite{appostal}:
\be
      Z(\tau)=\sum_{r=0}^{k}h_r J^{r}.
     \ee
    with some coefficients $h_r$.  The Klein function has the following Fourier expansion:
    \begin{eqnarray}
    \label{g1}
    J(\t)&=&\mathfrak{j}(\t)^3\\
    \label{j11}
    &=&q^{-1}+744+196884\, q+\cdots,
    \end{eqnarray}
    where, $\mathfrak{j}$ has expansion in terms of the Jacobi Theta functions and Eta function as follows 
\begin{eqnarray}
    \label{g11}
    \mathfrak{j} (\t)&:=&\frac{1}{2}\left[ \left(\sqrt{\frac{\theta_2(\t)}{\eta(\t)}}\right)^{16}+\left(\sqrt{\frac{\theta_3(\t)}{\eta(\t)}}\right)^{16}+\left(\sqrt{\frac{\theta_4(\t)}{\eta(\t)}}\right)^{16}\right] \nonumber\\
     &=&q^{\frac{-1}{3}}\left(1+248\,q+\cdots\right).
      \end{eqnarray}
   In order to obtain the bases for $Z(\t)$, we use the lemma  in \cite{Ashrafi-16}.    
    \begin{Lemma}\label{t41}
 The $S$-invariant function $f^{(r)}\left(\lbrace a^{(r)}\rbrace,\t\right)$  with  Fourier expansion
    \begin{align}\label{g31}
    &f^{(r)}\left(\lbrace a^{(r)}\rbrace,\t\right)=q^{\frac{-p}{3}}\left[\sum _{n=-r}^{0} a_n^{(r)} q^n+\sum _{n=1}^{\infty } a_n^{(r)} q^{n}\right],&p\in\{0,1,2\}.
    \end{align}
    on the upper half  $\tau$-plane, is a polynomial in $\mathfrak{j}$.
    \end{Lemma}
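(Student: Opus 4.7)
The plan is induction on $r$. Two ingredients will be used repeatedly: (i) $\mathfrak{j}$ is $S$-invariant with Fourier expansion $q^{-1/3}(1+O(q))$ and under $T$ picks up the cube root of unity $e^{-2\pi i/3}$, so $\mathfrak{j}^p$ transforms with the same $T$-phase $e^{-2\pi ip/3}$ that the prefactor $q^{-p/3}$ forces on $f^{(r)}$; (ii) since $\mathfrak{j}^3=J$, the function $\mathfrak{j}$ vanishes on $\mathcal H$ only on the $SL_2(\mathbb Z)$-orbit of $\rho=e^{2\pi i/3}$, and there to first order.

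For the inductive step I would form
\be
\tilde f(\t) := f^{(r)}(\t) - a_{-r}^{(r)}\,J(\t)^{r}\,\mathfrak{j}(\t)^{p} = f^{(r)}(\t) - a_{-r}^{(r)}\,\mathfrak{j}(\t)^{3r+p}.
\ee
The subtracted term has Fourier expansion $q^{-r-p/3}(1+O(q))$ and shares the $S$-invariance and $T$-phase of $f^{(r)}$, so $\tilde f$ is $S$-invariant with Fourier expansion of the form prescribed for $f^{(r-1)}$. By the inductive hypothesis $\tilde f$ is a polynomial in $\mathfrak{j}$, hence so is $f^{(r)}$.

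The base case $r=0$ is the heart of the proof: one must show that any such function reduces to $a_0\mathfrak{j}^p$. Setting $g=f^{(0)}-a_0\mathfrak{j}^p$ and $h=g/\mathfrak{j}^p$, both $g$ and $\mathfrak{j}^p$ share the same $T$-phase, so $h$ is $T$-invariant; both are $S$-invariant, so $h$ is a modular function for $SL_2(\mathbb Z)$, holomorphic on $\mathcal H$ away from the $SL_2(\mathbb Z)$-orbit of $\rho$, and vanishing at the cusp since the Fourier expansion of $g$ begins at order $q^{1-p/3}$.

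The main obstacle is to rule out the potential pole of $h$ at $\rho$, where $\mathfrak{j}^p$ vanishes to order $p$. For this I would use the element $ST\in PSL_2(\mathbb Z)$, which has order three and fixes $\rho$ with $(ST)'(\rho)=1/(\rho+1)^2=e^{-2\pi i/3}$. Combining $S$-invariance with the $T$-covariance of $g$ yields the functional equation $g((ST)\t)=e^{-2\pi ip/3}g(\t)$; expanding $g$ in a local coordinate in which $ST$ acts by multiplication by $e^{-2\pi i/3}$ forces only powers $k\equiv p\pmod 3$ to appear in the Taylor series, so $v_\rho(g)\geq p$ for $p\in\{0,1,2\}$ and $h$ extends holomorphically across the orbit of $\rho$. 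Being a holomorphic modular function on the compact Riemann surface $\mathcal H^\ast/SL_2(\mathbb Z)$ that vanishes at the cusp, $h$ must be identically zero, which gives $f^{(0)}=a_0\mathfrak{j}^p$ and closes the induction.
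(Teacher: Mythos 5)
Your proof is correct, and its inductive reduction---subtracting $a_{-r}^{(r)}\,\mathfrak{j}^{3r+p}=a_{-r}^{(r)}\,\mathfrak{j}^{p}J^{r}$ to lower the pole order while preserving $S$-invariance and the $T$-phase---is exactly the iteration used in the paper. Where you genuinely diverge is in how the remainder with only positive powers is killed. The paper observes that the leftover function $f^{(-1)}=q^{1-\frac{p}{3}}\sum_{m\ge 1}a^{(-1)}_m q^{m-1}$ acquires the phase $e^{-2\pi i p/3}$ under $T$, so its \emph{cube} is an honest $SL_2(\mathbb{Z})$-invariant function, holomorphic on the upper half plane and vanishing at the cusp, hence identically zero; no local analysis at elliptic points is needed. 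You instead divide by $\mathfrak{j}^{p}$ and must control the resulting pole along the orbit of $\rho=e^{2\pi i/3}$, which you do correctly: $\gamma=ST$ fixes $\rho$ with multiplier $\gamma'(\rho)=e^{-2\pi i/3}$, and the functional equation $g(\gamma\tau)=e^{-2\pi i p/3}g(\tau)$ forces the local Taylor coefficients of $g$ to vanish except in degrees congruent to $p$ modulo $3$, so $v_\rho(g)\ge p$ and $h=g/\mathfrak{j}^{p}$ is holomorphic everywhere (the same bound transports to the rest of the orbit because $g$ transforms by a character of the full modular group). Both routes then invoke the same valence-type fact that a holomorphic modular function vanishing at the cusp is zero. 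The paper's cubing trick buys simplicity---no elliptic-point bookkeeping and no need to know the divisor of $\mathfrak{j}$---while your argument is closer to the standard divisor/valence viewpoint and makes explicit the useful by-product that such $S$-invariant, $T$-covariant functions vanish at $\rho$ to order at least $p$; note also that your base case could be shortened by cubing $g$ directly rather than dividing by $\mathfrak{j}^{p}$.
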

       \begin{proof}
  From Eq.\eqref{j11} and Eq.\eqref{g11} we conclude  that there exist $\lbrace a^{(r-1)}\rbrace$ such that
    \be\label{g311}
    q^{\frac{-p}{3}}\sum _{n=-r}^{\infty} a_n^{(r)} q^n=a_{-r}^{(r)}\,\mathfrak{j}^p\, J^r+q^{\frac{-p}{3}}\sum _{n=-r+1}^{\infty} a_n^{(r-1)} q^n.
    \ee
    Therefore,
    \be\label{g312}
    f^{(r)}\left(\lbrace a^{(r)}\rbrace,\t\right)=a_{-r}^{(r)}\,\mathfrak{j}^p\, J^{r}+f^{(r-1)}\left(\lbrace a^{(r-1)}\rbrace,\t\right).
    \ee
    Since $f^{(r)}\left(\lbrace a^{(r)}\rbrace,\t\right)$, $\mathfrak{j}$ and $J$ are  $S$-invariant, so $f^{(r-1)}\left(\lbrace   a^{(r-1)}\rbrace,\t\right)$ is also $S$-invariant.
   The order of the pole of $f^{(r)}\left(\lbrace a^{(r)}\rbrace,\t\right)$ and $f^{(r-1)}\left(\lbrace a^{(r-1)}\rbrace,\t\right)$ are  $r$ and $r-1$ respectively. By iteration one can obtain
    \be\label{g313}
    f^{(r)}\left(\lbrace a^{(r)}\rbrace,\t\right)=\mathfrak{j}^p\left[a_{-r}^{(r)}J^{r}+a_{-(r-1)}^{(r-1)}J^{r-1}+\cdots +a_{0}^{(0)}\right]+f^{(-1)}\left(\lbrace a^{(-1)}\rbrace,\t\right),
    \ee
    where
    \be\label{g32}
    f^{(-1)}\left(\lbrace a^{(-1)}\rbrace,\t\right)=q^{1-\frac{p}{3}}\sum_{m\ge1} a^{(-1)}_m q^{m-1}.
    \ee
   Since the function $f^{(r)}\left(\lbrace a^{(r)}\rbrace,\t\right)$ is $T^3$ invariant for all values of $r$,  therefore  the function $\left[f^{(-1)}\left(\lbrace a^{(-1)}\rbrace,\t\right)\right]^3$ is modular invariant. It has no pole on the upper half plane and is zero at $\tau=i\infty$. Thus, it is zero on the upper half plane.
    \end{proof}
  \begin{Corollary}\label{C3}
    The  chiral  function  is a polynomial in function $\mathfrak{j} $ as follows:
     \begin{align}
     \label{ch1}
 Z(\t)=\mathfrak{j}^{k}\sum_{r=0}^{[k/3]}n_r J^{-r}, \quad &n_r\in {\mathbb N}.
    \end{align}
    \end{Corollary}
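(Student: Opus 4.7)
The plan is to reduce the statement directly to Lemma 2.1 by first checking that the chiral function $Z(\t)$ satisfies its hypotheses, and then rewriting the output of the lemma in the form stated. The physical branch of Section 2.2 gives $\beta=2\pi$ in \eqref{HF4} (the branch $\beta=\pi$ was excluded by \eqref{HF5}), so $Z(-1/\t)=Z(\t)$, i.e.\ $Z$ is $S$-invariant. Combining \eqref{alpha} with $c_L=8k$ from \eqref{c8} gives $\alpha=k/3$, so $Z$ is $T^3$-invariant. Hence the conditions of Lemma 2.1 are in place.

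Next I would match the expansion \eqref{HF2} to the template \eqref{g31}. Writing $k=3s+p$ with $p\in\{0,1,2\}$ and $s=[k/3]$, the Fourier series \eqref{HF2} becomes
\begin{equation}
Z(\t)=q^{-k/3}\sum_{h\ge 0}\rho(h)\,q^{h}=q^{-p/3}\!\left[\sum_{n=-s}^{0} a^{(s)}_n q^{n}+\sum_{n\ge 1} a^{(s)}_n q^{n}\right],
\end{equation}
with $a^{(s)}_{h-s}=\rho(h)$. This is exactly the form $f^{(r)}(\{a^{(r)}\},\t)$ of \eqref{g31} with $r=s$. Lemma 2.1 therefore produces a polynomial $P$ of degree $s$ in $J$ such that $Z(\t)=\mathfrak{j}^{p}P(J)=\mathfrak{j}^{p}\sum_{\ell=0}^{s} c_\ell J^{\ell}$.

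The final step is the trivial rewriting using $\mathfrak{j}^{3}=J$ from \eqref{g1}: for each $\ell\le s$,
\begin{equation}
\mathfrak{j}^{p}J^{\ell}=\mathfrak{j}^{p+3\ell}=\mathfrak{j}^{p+3s}\,J^{\ell-s}=\mathfrak{j}^{k}\,J^{\ell-s}.
\end{equation}
Reindexing by $r=s-\ell$ converts the polynomial in $J$ into the claimed sum $\mathfrak{j}^{k}\sum_{r=0}^{[k/3]} n_r J^{-r}$, with $n_r:=c_{s-r}$.

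The one nontrivial point beyond Lemma 2.1 is the assertion $n_r\in\mathbb{N}$. Integrality follows by inspecting the iterative subtraction \eqref{g311}: each $a^{(r-1)}_n$ is obtained from $a^{(r)}_n$ by subtracting an integer multiple of a series with integer Fourier coefficients (those of $\mathfrak{j}^{p}J^{r}$), so if the $\rho(h)$ are integers, the successive leading coefficients $a^{(r)}_{-r}=n_r$ are integers as well. Nonnegativity is where the extremal/physical input really enters, and I would justify it by demanding that $Z(\t)$ arise as a chiral character with $\rho(h)\ge 0$; this is the step I expect to be the main obstacle, since Lemma 2.1 is purely analytic and has no knowledge of positivity on its own.
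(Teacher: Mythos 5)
Your reduction is exactly the paper's (implicit) argument: Corollary~\ref{C3} is presented as an immediate consequence of Lemma~\ref{t41}, obtained just as you do by checking $S$-invariance (the $\beta=2\pi$ branch), reading off the $q^{n-p/3}$ expansion, i.e.\ $T^3$-invariance, from $\alpha=c_L/24=k/3$ with $k=3[k/3]+p$, and then rewriting $\mathfrak{j}^{p}J^{\ell}=\mathfrak{j}^{k}J^{\ell-[k/3]}$. Your caution about the claim $n_r\in\mathbb{N}$ is warranted but is not a gap relative to the paper, which gives no argument for positivity either and whose own examples, e.g.\ $Z_{32}=\mathfrak{j}^{4}-992\,\mathfrak{j}$ and $Z_{40}=\mathfrak{j}^{5}-1240\,\mathfrak{j}^{2}$, have negative coefficients, so only integrality of the $n_r$ (which your iteration argument via \eqref{g311} does establish for integer $\rho(h)$) can be expected in general.
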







\section{The Hecke Operators}\label{sec3}
 Let us define the subgroup of the modular group with $S$ and $T^3$ generators, which we call this group  $\Gamma_3$. 

The modular covariant function $f_3(\tau)$ with the Fourier expansion:
 \begin{equation}
 f_3(\tau)=\sum_{m=-k}^{\infty}a\left(m-\frac{1}{3}\right)q^{m-\frac{1}{3}},
 \end{equation}
is invariant under the group $\Gamma_3$. 

The Hecke operators are  linear operators which map modular form space $M_k$, onto itself and are defined as follows:
\be\label{h0}
T_nf(\tau)=n^{k-1}\sum_{d|n}d^{-k}\sum_{b=0}^{d-1}f\left(\frac{n\tau +bd}{d^2}\right).
\ee
The Hecke operators map the Modular functions, onto the modular functions. \\

  In this section, we  generalize  the definition of the Hecke operators, 3-Hecke operators for the group $\Gamma_3$ , which map the modular covariant functions onto the modular covariant functions. \\
 \begin{Definition}\label{def1}
 For positive integer values of $n$ and $n\neq 3\mathbb{N}$, the operator $T^{(3)}_n$ on the modular covariant function $f_3(\tau)$ is defined as follows
   \begin{equation}\label{he3}
 T^{(3)}_nf_3(\tau)=\sum_{d|n}\sum_{b=0}^{d-1}f_3\left(\frac{n\tau +3bd}{d^2}\right),\quad\quad  \end{equation}
\end{Definition}
We called the operators $T^{(3)}_n$, the 3-Hecke operators. We show that the 3-Hecke operators map the modular covariant functions  $f_3(\tau)$,  onto the modular covariant functions.
First, we study the Fourier expansion  of  $T^{(3)}_nf_3(\tau)$.
\begin{Theorem}
If $f_3(\t)$ has the  Fourier expansion
 \begin{equation}\label{fourierf3}
 f_3(\tau)=\sum_{m=-k}^{\infty}a\left(m-\frac{1}{3}\right)q^{m-\frac{1}{3}}.
 \end{equation}
 then, $T^{(3)}_nf_3(\tau) $ has the  Fourier expansion:
\be\label{fourierheck}
T^{(3)}_nf_3(\tau)=\sum_{m=-k}^{\infty} \gamma_n \left(m-\frac{1}{3}\right) q^{m-\frac{1}{3}},
\ee
where
\be\label{h2}
\gamma_n (m) =\sum_{d|(n,3m)}\frac{n}{d}a\left(\frac{nm}{d^2}\right).
\ee
\end{Theorem}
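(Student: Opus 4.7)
The plan is to substitute the Fourier expansion \eqref{fourierf3} directly into the definition \eqref{he3}, interchange the summations, evaluate the sum over $b$ by orthogonality of roots of unity, and then reindex the surviving double sum by the exponent of $q$ in order to read off each Fourier coefficient. This is the same template as the classical derivation of the coefficient formula for the ordinary Hecke operators on $SL(2,\mathbb{Z})$.

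The first step is mechanical. After substitution, the exponential $e^{2\pi i(m-1/3)(n\tau + 3bd)/d^2}$ factors as $q^{n(m - 1/3)/d^2}\, e^{2\pi i b(3m-1)/d}$, and the inner geometric sum in $b$ equals $d$ when $d \mid 3m - 1$ and vanishes otherwise. This yields
\begin{align*}
T^{(3)}_n f_3(\tau) = \sum_{d \mid n} d \sum_{\substack{m \geq -k \\ d \mid 3m - 1}} a(m - 1/3)\, q^{\,n(m - 1/3)/d^2}.
\end{align*}

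The core step is to collect terms with the same power of $q$. I would fix $M \geq -k$ and ask which pairs $(d, m)$ contribute the monomial $q^{M - 1/3}$. The equation $n(m - 1/3)/d^2 = M - 1/3$ rearranges to $n(3m - 1) = d^2(3M - 1)$, and a short divisibility check (using $d \mid n$ together with $d \mid 3m - 1$) shows that the admissible pairs $(d, m)$ are in bijection with the divisors $d'$ of $\gcd(n,\, 3M - 1)$, via $d = n/d'$ and $m - 1/3 = n(M - 1/3)/d'^2$. Inserting this parameterization, relabeling $d' \to d$, and reading off the coefficient of $q^{M - 1/3}$ produces precisely \eqref{h2}. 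This reindexing is where all of the care is needed: one must verify both that the map $(d, m) \leftrightarrow (d', M)$ is an honest bijection between the correct index sets and that the argument $n(M - 1/3)/d^2$ lies in the domain of $a$, i.e.\ takes the form (integer) $- 1/3$. The latter rests on $d^2 \equiv 1 \pmod{3}$ for every $d \mid n$, which is exactly where the hypothesis $n \notin 3\mathbb{N}$ is used in an essential way.
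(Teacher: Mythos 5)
Your strategy coincides with the paper's own proof: substitute the expansion \eqref{fourierf3} into \eqref{he3}, kill the $b$-sum by orthogonality of roots of unity, and reindex the surviving double sum by the exponent of $q$; your parameterization $d'=n/d$ is exactly the paper's ``replace $n/d$ by $d$'' step, stated more carefully, and for $n\equiv 1\pmod 3$ your argument is correct and complete.

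The one step that does not hold as you state it is the final domain check, and it is precisely the delicate point. From $d\mid n$, $d\mid 3M-1$ and $d^2\equiv 1\pmod 3$ one gets $n\left(M-\tfrac13\right)/d^2\equiv -\tfrac{n}{3}\pmod 1$, not $-\tfrac13$: the residue of $n$ itself enters, not just that of $d$. So the exponent class $-\tfrac13$ is preserved only when $n\equiv 1\pmod 3$. When $n\equiv 2\pmod 3$ there are \emph{no} admissible pairs $(d,m)$ producing the exponent $M-\tfrac13$ at all, so the claimed bijection with the divisors of $\gcd(n,3M-1)$ cannot exist; instead $T^{(3)}_n f_3$ expands in powers $q^{m-\frac23}$, e.g.\ $T^{(3)}_2\mathfrak{j}=q^{-2/3}+496\,q^{1/3}+\cdots$. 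In fairness, this is a defect of the theorem as printed: the paper's proof makes the same silent leap when it sets $pd/3=m-\tfrac13$, and the statements actually used later, \eqref{he3j1} and \eqref{he3j2}, are the corrected ones in which the exponent class is multiplied by $n$ modulo $3$. A smaller shared inaccuracy: by fixing $M\ge -k$ you inherit the printed summation range, but the $d=1$ term $f_3(n\tau)$ produces exponents down to $-n\left(k+\tfrac13\right)$, so for $n>1$ the output series genuinely begins below $m=-k$ and the reindexed sum should run over all such exponents.
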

\begin{proof}
By putting the Fourier expansion of the function $f_3(\tau)$\eqref{fourierf3} into \eqref{he3} we have
\begin{equation}\label{fourier1}
T^{(3)}_nf_3(\tau)=\sum_{m=-k}^{\infty}a\left(m-\frac{1}{3}\right)\sum_{d|n}e^{\frac{2\pi i n\tau}{d^2}\left(m-\frac{1}{3}\right)}\sum_{b=0}^{d-1}e^{\frac{2\pi ib(3m-1)}{d}}.
\end{equation}
The last sum in \eqref{fourier1} is zero for $d\nmid 3m-1$ and is equal to $d$ for $d\mid 3m-1$:
\begin{equation}\label{fourier2}
T^{(3)}_nf_3(\tau)=\sum_{m=-k}^{\infty}a\left(m-\frac{1}{3}\right)\sum_{d|n, d\mid 3m-1}de^{\frac{2\pi i n\tau}{d^2}(m-\frac{1}{3})}.
\end{equation}
Since, $d\mid 3m-1$; writing  $3m-1=pd$ and replacing $\frac{n}{d}$ with $d$ (because $d\mid n$) yeilds
\begin{equation}\label{fourier3}
T^{(3)}_nf_3(\tau)=\sum_{p=-3k-1}^{\infty}\sum_{d|n}a\left(\frac{pn}{3d}\right)\frac{n}{d}e^{\frac{2\pi i \tau dp }{3}}.
\end{equation}
The last term in the sum has the form $q^({\frac{pd}{3})}$. For all  terms which ${\frac{pd}{3}}$ is constant ${\frac{pd}{3}}=m-\frac{1}{3}$ one can obtain:  
\be\label{fourierheck1}
T^{(3)}_n f_3(\tau)=\sum_{m=-k}^{\infty}\sum_{d|(n,3m-1)}\frac{n}{d}a\left((m-\frac{1}{3})\frac{n}{d^2}\right) e^{2\pi i\tau(m-\frac{1}{3})}.
\ee
\end{proof}
%
%
\subsection{The Order $n$ Transformations}\label{sub31}
\indent For positive integer $n$, the order $n$ transformation $\Gamma(n)$, is defined as follows 
\begin{equation}
\tau\rightarrow A\tau=\frac{a\tau+b}{c\tau+d},\quad\quad ad-bc=n,
\end{equation}
 where  $a,b,c$, and $d$ are integers. The $\Gamma (1)=\Gamma$ transformations  correspond to the modular transformations.\\
 \indent The transformations $A_1$ and $A_2$ in  $\Gamma (n)$ are called equivalent if there exist a modular transformation $V\in\Gamma$, such that
 \begin{equation}
A_2\sim  A_1\quad \mbox{if}\quad A_2=VA_1.
 \end{equation}
 It is clear that the relation $\sim$ is an equivalence relation. So, the transformations $\Gamma (n)$  can be divided into the equivalence classes. Two element of $\Gamma (n)$ are in the same class, if and only if, they are equivalent.\\

\begin{Lemma}\label{lem2}
For every equivalence class of $\Gamma (n)$, there is a triangular representation $A_3$:
 \begin{equation}
 A_3=\left(\begin{array}{cc} a & 3b \\ 0 & d \\ \end{array} \right),
\end{equation}
where $n=3p+i$$(i=1,2)$ and $p\in\mathbb{N}$.
\end{Lemma}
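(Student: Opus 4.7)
The plan is to reduce an arbitrary $A = \begin{pmatrix} a & b \\ c & d \end{pmatrix} \in \Gamma(n)$ to the prescribed form in two stages, using only left multiplication by elements of $\Gamma = SL_2(\mathbb{Z})$. First I would clear the lower-left entry by a classical Hermite-type reduction: setting $g = \gcd(a, c)$, Bezout yields integers $x, y$ with $xa + yc = g$, so that
\[
V_1 = \begin{pmatrix} x & y \\ -c/g & a/g \end{pmatrix} \in SL_2(\mathbb{Z})
\]
satisfies $V_1 A = \begin{pmatrix} g & xb + yd \\ 0 & n/g \end{pmatrix}$. Writing this as $\begin{pmatrix} a' & b' \\ 0 & d' \end{pmatrix}$ with $a' d' = n$ places every class in $\Gamma(n)$ in upper-triangular form; this is the standard first step in the theory of Hecke operators.

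Second, I would adjust the $(1,2)$-entry to be a multiple of $3$ by further left-multiplying by a suitable power of $T = \begin{pmatrix} 1 & 1 \\ 0 & 1 \end{pmatrix} \in \Gamma$. Since
\[
T^k \begin{pmatrix} a' & b' \\ 0 & d' \end{pmatrix} = \begin{pmatrix} a' & b' + k d' \\ 0 & d' \end{pmatrix},
\]
what is needed is an integer $k$ with $b' + k d' \equiv 0 \pmod 3$, which is solvable exactly when $\gcd(d', 3)$ divides $b'$. The hypothesis $n = 3p + i$ with $i \in \{1, 2\}$ forces $3 \nmid n = a' d'$, so in particular $3 \nmid d'$ and $\gcd(d', 3) = 1$. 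Hence such $k$ always exists, and setting $3b := b' + k d'$, $a := a'$, $d := d'$ completes the reduction to the claimed form.

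The key conceptual point, and really the only non-routine one, is precisely the mod-$3$ adjustment in the second stage. The classical triangular reduction gives no control over $b' \bmod 3$; what makes the refinement possible is that $T$ translates the upper-right entry by $d'$, and only because $3 \nmid n$ does this action cover every residue class mod $3$. If $3 \mid n$ the argument would fail outright, which is exactly why the statement restricts to $n \not\equiv 0 \pmod 3$ — the same restriction that appears in Definition \ref{def1} for the $3$-Hecke operators. Uniqueness of the triangular representative is not asserted and would require an additional normalization such as $0 \le 3b < 3d$, but only existence is needed here, so no further work is required.
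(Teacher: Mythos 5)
Your proof is correct and follows essentially the same route as the paper: reduce to upper-triangular form, then use left multiplication by powers of $T$ (the paper's $V=\left(\begin{smallmatrix}1&q\\0&1\end{smallmatrix}\right)$) to shift the upper-right entry by multiples of $d'$, which reaches a multiple of $3$ precisely because $3\nmid n$ forces $3\nmid d'$. The only difference is cosmetic: the paper cites Apostol for the triangular reduction and checks the residues of $b_1$ case by case, whereas you derive the reduction via Bezout and solve the congruence directly from $\gcd(d',3)=1$, which is slightly cleaner but not a different argument.
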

\begin{proof}
As shown in \cite{appostal}, in every equivalence class of $\Gamma(n)$ there is a representation of triangular form
 \begin{equation}
 A_1=\left(\begin{array}{cc} a_1 & b_1 \\ 0 & d _1\\ \end{array} \right)\quad\quad d_1>0.
\end{equation}
 For $A_1$ and $A_2$ (two equivalent elements in $\Gamma (n)$), there is $V=\left(\begin{array}{cc} 1 & q \\ 0 & 1 \\ \end{array} \right)\in\Gamma$
 Such that
\be
 A_2=VA_1=\left(\begin{array}{cc} a_1 & qd_1+b_1 \\ 0 & d_1 \\ \end{array} \right).
 \ee
In order to prove this theorem, it  is necessary to show that $b_2$ is multiple of three.
\be\label{1}
b_2=qd_1+b_1.
 \ee
 Since $a_1d_1=n$ and $n\neq 3\mathbb{N}$, so $d_1$ can not be multiple integer of three and takes   $3s+1$ or $3s+2$ values. For fixed value of d, $b_1$  takes  $3r, 3r+1$ and $3r+2$  values. By substituting these values to \eqref{1}, one can show $b_2$ can be multiple of three (by choosing  appropriate values of $q$).
\end{proof}
\begin{Theorem}\label{theorem4}
A complete system of nonequivalent elements of  $\Gamma (n)$ is given by the set of triangular transformations of the form:
 \begin{equation}\label{2}
\left(\begin{array}{cc} a & 3b \\ 0 & d \\ \end{array} \right),
\end{equation}
 where $d$ runs through the positive divisors of n and for fixed values of $d$, $a=\frac{n}{d}$ and $b$ runs through a complete residue system of modulo $d$.\\
\end{Theorem}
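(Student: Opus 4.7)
The plan is to establish the theorem in two parts: first show that every equivalence class of $\Gamma(n)$ contains a representative of the stated form with $0 \le b < d$ (existence), and then show that distinct triples $(a,b,d)$ in the prescribed range yield inequivalent matrices (nonequivalence).

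For existence, Lemma \ref{lem2} already produces a triangular representative $\left(\begin{smallmatrix} a & 3b_0 \\ 0 & d \end{smallmatrix}\right)$ with $ad = n$ and $d>0$ in every class. To reduce $b_0$ into the residue system $\{0,1,\ldots,d-1\}$, I will left-multiply by $T^q = \left(\begin{smallmatrix} 1 & q \\ 0 & 1 \end{smallmatrix}\right)\in\Gamma$, which replaces the $(1,2)$ entry by $3b_0 + qd$. For the resulting matrix to keep the form demanded by the theorem, this entry must remain a multiple of $3$; since $3\nmid d$ (because $n\notin 3\mathbb{N}$ forces $3\nmid ad$), the allowed shifts are precisely $q = 3q'$, which send $b_0 \mapsto b_0 + q'd$. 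Choosing $q'$ appropriately brings $b$ into $\{0,1,\ldots,d-1\}$, so every class has at least one representative of the stated form.

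For nonequivalence, I plan a direct matrix computation. Suppose $A_1 = \left(\begin{smallmatrix} a & 3b_1 \\ 0 & d \end{smallmatrix}\right)$ and $A_2 = \left(\begin{smallmatrix} a' & 3b_2 \\ 0 & d' \end{smallmatrix}\right)$ with $ad=a'd'=n$ and $0\le b_i < d^{(\prime)}$ are equivalent via $V = \left(\begin{smallmatrix} \alpha & \beta \\ \gamma & \delta \end{smallmatrix}\right) \in \Gamma$, so $A_2 = VA_1$. Expanding the product, the $(2,1)$ entry yields $\gamma a = 0$, whence $\gamma = 0$ (as $a\ge 1$). Then $\det V = \alpha\delta = 1$ forces $\alpha = \delta = \pm 1$, and passing to $PSL_2(\mathbb{Z})$ we may take $\alpha=\delta=1$. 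Comparing remaining entries gives $a=a'$, $d=d'$, and $\beta d = 3(b_2 - b_1)$. Invoking $\gcd(3,d)=1$ a second time, this implies $d\mid(b_2-b_1)$, and with both $b_i\in\{0,\ldots,d-1\}$ we conclude $b_1=b_2$.

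The main conceptual point, and the feature distinguishing this from the classical parametrization of $\Gamma\backslash\Gamma(n)$ cosets, is the interplay between the factor of $3$ in the $(1,2)$ entry and the coprimality $\gcd(3,d)=1$ inherited from $n\notin 3\mathbb{N}$. This coprimality enters twice: once to see that the allowed shifts of $b_0$ under $T^q$ sweep out the full residue class modulo $d$, and once to invert $3$ modulo $d$ in the uniqueness argument. I do not anticipate any serious obstacle beyond carefully tracking this hypothesis; the rest is routine $2\times 2$ linear algebra over $\mathbb{Z}$.
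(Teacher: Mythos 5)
Your proposal is correct and follows essentially the same route as the paper: Lemma \ref{lem2} supplies the triangular representative with $(1,2)$ entry divisible by $3$, and a direct computation with $A_2=VA_1$ (forcing $\gamma=0$, $\alpha=\delta=\pm1$) pins down $a$, $d$ and $b$ modulo $d$, with $\gcd(3,d)=1$ from $n\notin 3\mathbb{N}$ used exactly where the paper uses it. The only cosmetic difference is that you normalize $b$ into $\{0,\ldots,d-1\}$ explicitly in the existence step, whereas the paper phrases this as the equivalence criterion $b_2=b_1+q'd$.
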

\begin{proof}
The  lemma \eqref{lem2} shows every element of $\Gamma (n)$ is equivalent to one of the transformations in \eqref{2}. So, we should show  two  transformations $A_1$ and  $A_2$  are equivalent, if and only if
  \be\label{pr3}
  a_1=a_2, \quad d_1=d_2, \quad \mbox{and}\quad  b_2=b_1+q'd.
  \ee
  where,
   \begin{equation}
 A_i=\left(\begin{array}{cc} a_i & 3b_i \\ 0 & d_i \\ \end{array} \right) \quad\quad i=1,2.
\end{equation}
 First, we show if \eqref{pr3} holds, then $A_1\sim A_2$. For some integer $q$, if we consider $V$ as follows
 \be
 V=\left(\begin{array}{cc} 1 & q \\ 0 & 1 \\ \end{array} \right)\in\Gamma.
 \ee
 where $q=3q'$, then, $VA_1=A_2$, so $A_1\sim A_2$.\\
 Conversely, if $A_1\sim A_2$ there exists $V\in \Gamma$
 \be
V=\left(\begin{array}{cc} p & q \\ r & s \\ \end{array} \right),
 \ee
 such that
 \be
\left(\begin{array}{cc} a_2 & 3b_2 \\ 0 & d_2 \\ \end{array} \right)=\left(\begin{array}{cc} p & q \\ r & s \\ \end{array} \right) \left(\begin{array}{cc} a_1 & 3b_1 \\ 0 & d_1 \\ \end{array} \right)= \left(\begin{array}{cc} pa_1 & 3pb_1+qd_1 \\ ra_1 & 3rb_1+sd_1 \\ \end{array} \right).
 \ee
 The above equality shows $r=0$ (since $a_1\neq 0$). From $ps-qr=1$, we can conclude  $ps=1$, so $p=s=1$ or $p=q=-1$. Let us consider $p=s=1$ (for the other case we replace $V$ by $-V$).  By equating the entries in the above equation, we have
  \be\label{3}
  a_1=a_2,\quad  d_1=d_2, \quad and\quad  3b_2=3b_1+qd.
  \ee
  Since, $a_1d_1=n$ and $n$ is not integer multiple of three, therefore, $q=3q'$:
    \be\label{3}
 b_2=b_1+q'd.
  \ee
\end{proof}

 \begin{Lemma}\label{lem5}
 For $A_1\in \Gamma (n)$, $V_1\in \Gamma$ there exists transformation $A_2\in\Gamma(n)$ and $V_1\in \Gamma$ such that
 \be
A_1V_1=V_2A_2.
\ee
where
  \be
 A_i=\left(\begin{array}{cc} a_i & 3b_i \\ 0 & d_i \\ \end{array} \right), \quad and \quad\quad V_i=\left(\begin{array}{cc} \alpha _i & \beta _i \\ \gamma _i & \delta _i \\ \end{array} \right).
 \ee
 \end{Lemma}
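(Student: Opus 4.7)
The plan is to deduce the lemma directly from Theorem \ref{theorem4}, which classifies $\Gamma(n)$ by a complete set of nonequivalent triangular representatives whose upper-right entry is a multiple of $3$. First I would form the composite $M := A_1 V_1$. Since $\det A_1 = n$ and $\det V_1 = 1$, and since both $A_1$ and $V_1$ have integer entries, the product $M$ has integer entries with $\det M = n$, so $M \in \Gamma(n)$.

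Next I would invoke Theorem \ref{theorem4}: the $\sim$-equivalence class of $M$ in $\Gamma(n)$ contains a (unique) representative of the required triangular form
$$A_2 = \begin{pmatrix} a_2 & 3b_2 \\ 0 & d_2 \end{pmatrix}, \qquad a_2 d_2 = n, \quad 0 \le b_2 < d_2.$$
By the definition of $\sim$ recorded in Section \ref{sub31}, the relation $M \sim A_2$ means precisely that there exists $V_2 \in \Gamma$ with $M = V_2 A_2$. Substituting $M = A_1 V_1$ then yields the claimed identity $A_1 V_1 = V_2 A_2$, which is exactly the content of the lemma.

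The main (and mild) obstacle is ensuring that the triangular representative of $A_1 V_1$ really has its upper right entry divisible by $3$, rather than merely being an integer; this is exactly the refinement provided by Lemma \ref{lem2} and Theorem \ref{theorem4}, and it ultimately rests on the hypothesis $n \notin 3\mathbb{N}$ (so that the divisors $d$ of $n$ are coprime to $3$, allowing one to absorb any integer shift $qd$ into a shift $3q'd$ by suitable choice of $q$). Once that step is invoked, the remainder of the argument is a purely formal rewriting of the equivalence relation $\sim$ as an explicit matrix factorization, so no further calculation is required.
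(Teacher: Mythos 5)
Your proposal is correct and follows essentially the same route as the paper: compute $\det(A_1V_1)=n$ to place the product in $\Gamma(n)$, then invoke the existence of a triangular representative with upper-right entry $3b_2$ in its equivalence class (the paper cites Lemma \ref{lem2} directly, you route through Theorem \ref{theorem4}, which rests on that lemma) and unwind the definition of $\sim$ to get $A_1V_1=V_2A_2$. Your added remark that the divisibility-by-$3$ refinement hinges on $n\notin 3\mathbb{N}$ is a correct and welcome clarification, but it is not a departure from the paper's argument.
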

\begin{proof}
$det(A_1V_1)=detA_1detV_1 =n$, so $A_1V_1\in \Gamma (n)$. According to lemma \ref{lem2}, there exists  $A_2\in \Gamma (n)$ and $V_2\in \Gamma$ such that
 \be\label{4}
A_1V_1=V_2A_2.
\ee
\end{proof}

 Now, by using \eqref{4} for $V_1=S=\left(\begin{array}{cc} 0 & -1 \\ 1 & 0 \\ \end{array} \right)$ transformation, we derive  the elements of $A_2$ and $V_2$ in terms of the element of $A_1$. By equating the entries in \eqref{4}, we have
 \begin{eqnarray}\label{5}
 a_2&=&\frac{d_1}{\gamma_2},\\\nonumber
  3b_2&=&-a_1\delta_2, \\\nonumber
   d_2&=&a_1\gamma_2,\\\nonumber
  \alpha_2 &=&\frac{3b_1\gamma_2}{d_1},\\\nonumber
   \beta_2 &=&\frac{3b_1\delta_2}{d_1}-\frac{1}{\gamma_2}.\\\nonumber
 \end{eqnarray}
From \eqref{5}, we recognize that  $V_2$ has two independent entries $\delta_2$ and $\gamma_2$. Since; $n=a_1d_1\neq 3\mathbb{N}$ , the second and the forth equation in \eqref{5} show  $\delta_2$ and $\alpha_2$ are multiples of three. \\
\indent We already know  $S$ and $T=\left(\begin{array}{cc} 1 & 1 \\ 0 & 1 \\ \end{array} \right)$ transformations are generators of the modular group and each elements of the modular group can be written in the below form
\be
ST^{n_1}ST^{n_2}.......
\ee
 Since $V_2$ has two independent entries, one can write it as follows
  \be\label{6}
V_2=ST^{n_1}ST^{n_1}ST^{n_2}.
 \ee
 From \eqref{6} we have
 \begin{eqnarray}\label{7}
  \alpha_2 &=&\frac{3b_1\gamma_2}{d_1}=-n_1,\\\nonumber
   \beta_2 &=&\frac{3b_1\delta_2}{d_1}-\frac{1}{\gamma_2}=1-n_1n_2,\\\nonumber
   \gamma _2&=&n_1^2-1,\\\nonumber
   \delta_2 &=& n_{2}(n_1^2-1)-n_1.
 \end{eqnarray}
 For the case where $n$ is not multiple of three, from \eqref{5} and \eqref{7} we conclude  $n_1$ and $n_2$ are multiple of three.\\

\begin{Theorem}
 For  an integer value of  $n$ which is not multiple of three, if $f_3(\tau)$ is modular covariant then, $T^{(3)}_n f_3(\tau)$ is covariant under modular transformations. Hence, $T^{(3)}_n f_3(\tau)$ maps the $\Gamma_3$ invariant function $f_3(\t)$ onto the $\Gamma_3$ invariant function.
 \end{Theorem}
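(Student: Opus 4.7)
The plan is to reduce $\Gamma_3$-invariance of $T^{(3)}_n f_3$ to invariance under the two generators $S$ and $T^3$ of $\Gamma_3$, and in each case to reshuffle the defining sum \eqref{he3} back to its original form. The strategy mirrors the classical proof that the ordinary Hecke operators commute with $\Gamma$, with $\Gamma$ replaced by $\Gamma_3$ and with the hypothesis $3 \nmid n$ ensuring that the accompanying modular transformation stays inside $\Gamma_3$.

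I would first dispose of $T^3$ by a direct shift of summation variable. Substituting $\tau \mapsto \tau + 3$ in \eqref{he3} produces an extra $3n/d^2$ inside each $f_3$. Rewriting this contribution and reindexing $b \mapsto b + n/d \pmod{d}$ (a bijection on $\{0,\dots,d-1\}$ since $d \mid n$) recasts the extra term as an integer multiple of $3$ added to the argument of $f_3$. Because $f_3$ is $T^3$-invariant (immediate from the Fourier expansion \eqref{fourierf3}), this shift drops out and one recovers $T^{(3)}_n f_3(\tau)$.

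The harder step is $S$-invariance. Fix a canonical triangular representative $A$ from Theorem \ref{theorem4} and apply Lemma \ref{lem5} with $V_1 = S$ to obtain $AS = V_2 A'$, where $A'$ is again in canonical triangular form and $V_2 \in \Gamma$. The explicit computation in equations \eqref{5} and \eqref{7} identifies $V_2 = S T^{n_1} S T^{n_1} S T^{n_2}$ with $n_1 = -\alpha_2$ and $n_2$ determined from $\delta_2$; the hypothesis $3 \nmid n$ forces both $\alpha_2$ and $\delta_2$ in \eqref{5} to be divisible by $3$, so $n_1, n_2 \in 3\mathbb{Z}$ and $V_2 \in \Gamma_3$. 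Using $\Gamma_3$-invariance of $f_3$ I then get $f_3(AS\tau) = f_3(V_2 A'\tau) = f_3(A'\tau)$, so
\begin{equation*}
T^{(3)}_n f_3(S\tau) \;=\; \sum_{A} f_3(AS\tau) \;=\; \sum_{A} f_3(A'\tau).
\end{equation*}
Right multiplication by $S$ permutes left-equivalence classes in $\Gamma(n)$ (it is invertible and respects the relation $\sim$), so by Theorem \ref{theorem4} the set $\{A'\}$ runs again through all canonical triangular representatives exactly once. The right-hand sum is therefore $T^{(3)}_n f_3(\tau)$, and $T^{(3)}_n f_3$ is $S$-invariant.

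I expect the main obstacle to be the third paragraph: confirming that every $V_2$ arising from Lemma \ref{lem5} actually lies in $\Gamma_3$. This is precisely where $3 \nmid n$ is essential, since it is what propagates the divisibility in \eqref{5} to $3 \mid n_1$ and $3 \mid n_2$ in \eqref{7}. Without it, one would get only $V_2 \in \Gamma$, and the $S$-invariance step would collapse --- which is consistent with the fact that for $3 \mid n$ the canonical triangular representatives would have to be organized with a different residue structure than \eqref{2}. The remaining pieces, namely the $T^3$ reindexing and the bijection on canonical representatives, are routine bookkeeping once this arithmetic fact is in hand.
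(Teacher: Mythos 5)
Your proposal is correct and follows essentially the same route as the paper: reduce to the generators $S$ and $T^3$ of $\Gamma_3$, and for $S$ use Lemma \ref{lem5} together with the explicit factorization $V_2=ST^{n_1}ST^{n_1}ST^{n_2}$ and the fact that $3\nmid n$ forces $3\mid n_1,n_2$, so that $f_3(A_1S\tau)=f_3(A_2\tau)$ and the canonical triangular representatives of Theorem \ref{theorem4} are merely permuted. The only differences are cosmetic: you establish $T^3$-invariance by reindexing $b\mapsto b+n/d \pmod d$ rather than reading it off the Fourier expansion \eqref{fourierheck}, and you make explicit the bijection on equivalence classes under right multiplication by $S$ that the paper uses implicitly in \eqref{he12}.
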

\begin{proof}
 Since $d|n$ one can  rewrite the 3-Hecke operator as follow
 \be\label{he1}
T^{(3)}_nf_3(\tau)=\sum_{a\geq 1, ad=n}\sum_{b=0}^{d-1}f_3(A\t),
\ee
where $A$ is an element of $\Gamma (n)$:
\begin{equation}
A\tau=\frac{a\tau+3b}{d}.
\end{equation}
 From \eqref{he1} we have:
  \be\label{he01}
T^{(3)}_n f_3(S\tau)=\sum_{a_1\geq 1, a_1d=n}\sum_{b=0}^{d-1}f_3(A_1S\t),
\ee
Using lemma \eqref{lem5}, we have
  \be\label{he21}
f_3(A_1S\t)=f_3(ST^{n_1}ST^{n_1}ST^{n_2}A_2\t)=e^{\frac{-2(2n_1+n_2)\pi i}{3}}f_3(A_2\t).
\ee
 As we showed earlier,  $n_1$ and $n_2$ are multiple of three, therefore:
  \be\label{heck44}
f_3(A_1S\tau)=f_3(A_2\tau).
 \ee
 Substituting \eqref{heck44} to \eqref{he1} yields to:
  \be\label{he12}
T^{(3)}_n f_3(S\tau)=\sum_{a_2\geq 1, a_2d=n}\sum_{b=0}^{d-1}f_3(A_2\t)=T^{(3)}_n f_3(\tau).
\ee
\eqref{he12} shows that the 3-Hecke operators are invariant under $S$ transformation. The Fourier expansion \eqref{fourierheck}, shows that the 3-Hecke operators are modular covariant under $T$ transformation and are invariant under $T^3$ transformation. Since every elements of the group of $\Gamma_3$ are built from the multiplication of the $S$ and $T^3$ generators, we conclude that the 3-Hecke operators map the $\Gamma_3$ invariant functions $f_3(\t)$ into the $\Gamma_3$ invariant functions.
\end{proof}
\section{Three Dimensional Gravity}\label{sec4}
Our focus in this section is solving the pure quantum gravity in the sense of  finding the dual boundary CFT. As  it is shown in \cite{witten}, the dual CFT is extremal which means that the lowest dimension of the primary fields excluding the identity, is $k+1$ for $c=24k$, and the partition function should be holomorphically factorizable. The allowed values of the left  and right central charges can be obtained from the symmetry group of the Chern-Simons gauge theory. The symmetry group can be the group $SO(2,1)\times SO(2,1)$ and its $n$th diagonal cover:
\begin{equation}\label{klkr2}
k_L \in \bigg \{ \begin{array}{ll}
n^{-1} \mathbb{Z} ~~~~~~n \in 2\mathbb{Z}+1\\
(2n)^{-1}\mathbb{Z}~~~~~~n \in 2 \mathbb{Z}
\end{array} 
\end{equation}
where,  $k_L$ and $k_R$ are the Chern-Simons couplings. From the AdS/CFT dictionary, the corresponding central charges are obtained as follows
\begin{equation}
(c_L,c_R)=(24k_L, 24k_R)
\end{equation}
Therefore, the first step to solve the quantum gravity is determining  the covering group. The holomorphically factorizable and modular invariance of the partition function is necesary and sufficient condition for determining the symmetry group. 

In section \eqref{s2} we show  for the modular invariant holomorphically factorizable partition function, the scaling dimensions of the chiral and anti-chiral  functions should be an integer number and the left and right central charges are integer multiples of eight:
\begin{equation}\label{clcr}
c_L=8m_L, \quad\quad c_R=8m_R,\quad\quad\quad\quad m_L,m_R\in\mathbb{N}.
\end{equation}
From \eqref{klkr2} and \eqref{clcr}, we conclude that the allowed values of  $n$ should be $1$ or $3$. So,  the symmetry group of the Chern-Simons gauge theory should be $SO(2,1)\times SO(2,1)$ and its three-fold diagonal cover.

For pure gravity with the asymptotic space-time  AdS$_3$, the  vacuum state is the trivial state where its chiral  function is obtained as follows
\begin{equation}\label{zvac}
Z^{vac}(\tau)=q^{-\frac{c_L}{24}}\prod_{n=2}^{\infty}\frac{1}{1-q^n},
\end{equation}
The  vacuum state   corresponds to the Anti de Sitter Space, classically. Since, the vacuum partition function \eqref{zvac}, is not modular covariant, there should be other states in the theory. This is in the agreement with the existence of the BTZ black hole in the theory. The mass and the angular momentum of the classical BTZ black hole in terms of the Virasoro generators $L_0$ and $\bar{L}_0$ are obtained as follows
\begin{eqnarray}\label{mass}
M&=&\frac{1}{l}(L_0+\bar{L}_0),\\\nonumber
J&=&(L_0-\bar{L}_0).
\end{eqnarray}
where
\be\label{mj}
Ml\geqslant |J|.
\ee
and the  entropy is
\be\label{entropy0}
S=4\pi\left(\sqrt{\frac{c_L}{24}L_0}+\sqrt{\frac{c_R}{24}\bar{L}_0}\right).
\ee
From \eqref{mj} and \eqref{entropy0}, we conclude that $L_0\geqslant1$. Hence, the full chiral function has the following form:
\begin{equation}\label{ztot}
Z(\tau)=q^{-\frac{c_L}{24}}\prod_{n=2}^{\infty}\frac{1}{1-q^n}+{\cal O}(q).
\end{equation}
The modular covariant constraint determines the partition function uniquely. For $k_L, k_R\in\mathbb{Z}$, the chiral and anti-chiral  functions are modular functions. In \cite{witten}, the partition function and the entropy are investigated. In this section we  study the case where  $k_L, k_R\in\frac{\mathbb{Z}}{3}$. We showed in section \eqref{s2}, modular invariant of $Z(\t,\tb)$ demands $c_L=c_R=8k, k\in\mathbb{Z}$.

From corollary \eqref{C3}, the chiral  function is obtained as follows 
 \begin{align}
     \label{ch11}
 Z(\t)=\mathfrak{j}^{k}\sum_{r=0}^{[k/3]}n_r J^{-r},\quad &n_r\in {\mathbb N}.
    \end{align}
    where the $n_r$  coefficients are determined from the fact that the density of the low dimensional state, should be equal to the density of the corresponding low dimensional state of the vacuum. It is worth noting that for $k=3m$ for some $m\geq 1$, these candidate partition functions are exactly the candidate torus path integral  introduced in \cite{witten} . For $k=1$ to $k=11$, the chiral  functions is obtained in \cite{Avramis:2007gx}. Here are some examples:
    \begin{align}\label{wit14}
Z_8(\tau)=\mathfrak{j}(\tau)=q^{\frac{-1}{3}}+248q^{\frac{2}{3}}+4125q^{\frac{5}{3}}+\cdots,
\end{align}
\begin{align}\label{wit15}
Z_{16}(\tau)=\mathfrak{j}^{2}(\tau)=q^{\frac{-2}{3}}+496q^{\frac{1}{3}}+69752q^{\frac{4}{3}}\cdots,
\end{align}
\begin{align}\label{wit16}
Z_{32}(\tau)&=\mathfrak{j}^{4}(\tau)-992\mathfrak{j}(\tau)\\\nonumber
&=q^{\frac{-4}{3}}+139504q^{\frac{1}{3}}+69332992q^{\frac{5}{3}}\cdots,
\end{align}
\begin{align}\label{wit17}
Z_{40}(\tau)&=\mathfrak{j}^{5}(\tau)-1240\mathfrak{j}^{2}(\tau)\\\nonumber
&=q^{\frac{-5}{3}}+20620q^{\frac{1}{3}}+86666240q^{\frac{4}{3}}\cdots.
\end{align}
The chiral function \ref{ch11} is unitary modular covariance and has the positive density of state. For $c=8$ and $c=16$, the chiral functions $Z_8$ and $Z_{16}$ are well known and is believed that they are unique. The chiral functions $Z_8$ and $Z_{16}$ are the vacuum character of the level 1 affine $\hat{E}_8$ and the level 1 affine $\hat{E}_8\times\hat{E}_8$ theory respectively. For $c=32$  and $c=40$ the chiral function has been identified with $\mathbb{Z}_2$ orbifolds of theories defined on even unimodular lattices of the respective rank possessing no vectors of squared length 2. For $c>40$, the existence of these CFTs are not known and is an open question \cite{hon1, hon2}. If these CFT's exist they are good candidates for the pure gravity.

The closed-form of the chiral function is obtained by using the generalize 3-Hecke operator \eqref{he3}. The Fourier expansion \eqref{fourierf3} and \eqref{fourierheck} shows for 
   \begin{align}\label{j1}
\mathfrak{j}(\tau)=q^{\frac{-1}{3}}+{\cal O}(q).
\end{align}
and
\begin{align}\label{j2}
\mathfrak{j}^{2}(\tau)=q^{\frac{-2}{3}}+{\cal O}(q).
\end{align}
The 3-Hecke operators have  the following expansion
   \begin{align}\label{he3j1}
T^{(3)}_{n}\mathfrak{j}(\tau)=q^{\frac{-n}{3}}+{\cal O}(q).
\end{align}
and
\begin{align}\label{he3j2}
T^{(3)}_{n}\mathfrak{j}^{2}(\tau)=q^{\frac{-2n}{3}}+{\cal O}(q).
\end{align}
Therefore, for $c=8k$   the chiral  function is obtained as follows
\be\label{z2k+1}
Z_{k}(\tau)=\sum_{r=0}^{k}a_{-r}T^{(3)}_r\mathfrak{j}^{i}(\tau),\quad\quad i=1,2.
\ee
where $i=1$ is for $k=\mbox{odd}$  and $i=2$ corresponds with $k=\mbox{even}$. The  $a_r$ coefficients, are the low state density of the vacuum:
\be
Z^{vac}(\t)=\sum_{r=-k}^{\infty}a_rq^r.
\ee
In order to determine the entropy, let us write the chiral function as:
\be\label{partition11}
Z_k(\t)=\sum_{m=-k}^{\infty}b_{k,m}q^m.
\ee
Using \eqref{z2k+1} and \eqref{fourierheck}, the  $b_{k,m}$ coefficients are  obtained as follows
\be\label{bkm}
b_{k,m}=\sum_{r=0}^{k}a_{-r}\sum_{d|(r,3m)}\frac{r}{d}c_i\left(\frac{rm}{d^2}\right),\quad i=1,2.
\ee
where $c_1(m)$, and $c_2(m)$ are  the $\mathfrak{j}$  and $\mathfrak{j}^2$ Fourier expansion's coefficients, respectively. The  $c_i$  coefficients are obtained as follows ( up to the exponentially  suppressed terms)
 \cite{Loran:2010bd}, \cite{Alday:2019vdr}:
\begin{eqnarray}\label{cm7}
c_i(m)=2\pi\sqrt{\frac{i/3}{m-i/3}}I_1\left(4\pi\sqrt{\frac{i}{3}(m-\frac{i}{3})}\right),
\end{eqnarray}
and 
\be
a_r=P(r)-P(r-1).
\ee
The partition numbers $P(r)$ are obtained from Peterson-Rademacher expansion:
\be
P(r)=2\pi\left(\frac{1/24}{r-1/24}\right)^{3/4}\sum_{k=1}^{\infty}\frac{1}{k}\mbox{Kl}\left(r-\frac{1}{24},-\frac{1}{24};k\right) I_{3/2}\left(\frac{4\pi}{k}\sqrt{\frac{1}{24}(r-\frac{1}{24})}\right).
\ee
where, Kl(a,b;k) is the Kloosterman sum.  
Using \eqref{bkm}, the microcanonical entropy obtained as follows
\begin{eqnarray}\label{entropy1}
S(k,m)&=&\ln{b_{k,m}}
\end{eqnarray}
We obtain the entropy in the semiclassical limit. In  \eqref{bkm}, the leading term in the large $km$ limit is $d=1$:
\be\label{log}
b_{k,m}=kc(km)+(k-1)c((k-1)m)+(k-2)c((k-2)m)+\cdots,
\ee
Now, we compare the first and the second terms in the Eq.\eqref{log}. Using \eqref{cm7} and the asymptotic behavior of the Bessel Function:
\be
I_1(z)\sim\frac{e^z}{\sqrt{2\pi z}},\quad\quad z\gg 1,
\ee
 we have
\begin{eqnarray}\label{cd23}
\frac{I_1\left(4\pi\sqrt{\frac{k}{3}(m-\frac{1}{3})}\right)}{I_1\left(4\pi\sqrt{\frac{(k-1)}{3}(m-\frac{1}{3})}\right)}\sim e^{2\pi\sqrt{{(m-1/3)}{k}}}.
\end{eqnarray}
As \eqref{cd23} shows the first term in \eqref{log} is the dominant term and the other terms are exponentially small in $\sqrt{\frac{m-1/3}{k}}$, which are important for large values of $\frac{m-1/3}{k}$.

 In the large $m$  limit, the  $c_i(m)$ coefficients are obtained as follows
  \be\label{cm}
\ln c_i(m)=4\pi\sqrt{\frac{i}{3}(m-\frac{i}{3})}-\frac{3}{4}\ln(m-\frac{i}{3})+\frac{1}{4}\ln\frac{i}{3}-\frac{1}{2}\ln 2\quad\quad i=1,2
\ee
 In the large $k$  and $m$,  where $\frac{m}{k}$ is constant, the leading terms in the entropy  comes from $r=k$. Therefore, in the semiclassical limit the entropy is obtained as follows
\begin{eqnarray}\label{entropy}
S(k,m)&=&\ln{b_{k,m}}=\ln{kc_i\left(k(m-\frac{1}{3})\right)}+\cdots\\\nonumber
&=&4\pi\sqrt{\frac{i}{3}(k(m-\frac{i}{3}))}-\frac{3}{4}\ln(m-\frac{i}{3})+\frac{1}{4}\ln k+\frac{1}{4}\ln\frac{i}{3}-\frac{1}{2}\ln 2+\cdots
\end{eqnarray}
the first  term is well known Bekenstein-Hawking entropy which is proportional with the area of the BTZ black hole.  The other terms determine the logarithmic corrections to the entropy which were calculated before\cite{Strominger:1997eq}-\cite{Govindarajan:2001ee}. These logarithmic correction to the entropy  typically appears in the microcanonical entropy. 
Eq \eqref{cd23} shows the portion of  $r=k$ and $r=k-1$ terms. As this equation shows, for large values of $\frac{m-1/3}{k}$, which means the size of the BTZ black  hole is in the order of AdS scale, these terms are exponentially suppressed. The physical interpretation of these terms would be interesting to investigate.

\section{Summary}
In this study we have investigated the 3d quantum gravity and its corresponding CFT.  The equivalence between 3d gravity and Chern-Simons gauge theory, shows that the boundary CFT should be extremal and the partition function  should be holomorphically factorizable and the left and right central charges are $(24k_L, 24k_R)$. Existence of extremal CFT for $k_L, k_R>1$ is still an open problem. There are two approaches to view this subject. In the positive approach we view the current failure of attempts to non- existence proof as an      indication that extremal CFT exist as a mathematical object. While in the negative approach, the lack of current constructed extremal CFTs is viewed as an indication that they do not exist. Additional investigation is required to clarify matter. The values of the gauge couplings have been obtained from the gauge group of the Chern-Simons theory. 

We studied the holomorphic factorized CFT.  From modular invariant of partition function we showed that the chiral and anti-chiral  functions are $S$ invariant and $T$ covariant. Therefore,  the chiral and anti-chiral  functions are modular covariant. For chiral CFT, scaling dimension of the primary fields are integer numbers and the central charges are integer multiple of 8 $(c_L=24k_L, c_R=24k_R$ for $k_L, k_R\in\frac{\mathbb{Z}}{3})$  . We have obtained  the bases for the chiral  functions in-terms of the Klein function and $\mathfrak{j}=J^{\frac{1}{3}}$. It is believed that for $k\in3\mathbb{Z}$ where the chiral  function is modular invariant (therefore the chiral function itself is a modular invariant partition function and can be consider as a corresponding purely chiral CFT ), the chiral extremal CFT is holomorphically dual with chiral gravity  \cite{Li:2008dq, Maloney:2009ck}.

The gauge group of the Chern-Simons theory can be $SO(2,1)\times SO(2,1)$ and its n-fold diagonal cover. From modular invariance of the holomorphically factorizable partition function, we have shown the symmetry group of the theory can be one of two choices: either $SO(2; 1) \times SO(2; 1)$ or its three-fold diagonal cover. Furthermore, we have introduced the generalized Hecke operators (3-Hecke operators) which map the modular covariant functions to modular covariant functions. We have studied the 3d pure gravity and corresponding boundary CFT for  the case where the gauge group of the Chern-Simons theory is 3-fold diagonal cover (i.e.  left and right central charges are integer multiple of eight ). For the case where the central charges are multiple of 24, it has been tried to calculate the partition function of the pure gravity by different methods including summing over the known classical geometries contributions to the partition function, including quantum corrections \cite{Maloney:2007ud, Keller:2014xba} and the Rademacher expansions \cite{Alday:2019vdr}. The resulting partition functions have some unknown physical properties including the negative norm sates. In this case the chiral partition function is modular invariant.  The calculation of the  partition function of the pure gravity for central charges multiple of 8 (where the chiral  function is modular covariant), using the same method as in\cite{Maloney:2007ud}, would be an interesting matter which can be studied in the future. 

 Using the 3-Hecke operators we have obtained the closed-form for the conjectural partition functions for extremal CFT2s, and the corresponding microcanonical entropies, when the chiral central charges are multiples of eight. We have computed subleading corrections to the Beckenstein-Hawking entropy in the bulk gravitational theory with these conjectural partition functions. We showed  the  microcanonical  entropy is equal to the  Bekenstein-Hawking entropy, the logarithmic corrections and some subleading terms which, are important when the size of the BTZ black hole is of the order of the AdS scale. The logarithmic corrections were obtained before for the BTZ black hole, but the subleading terms are new terms which studying their physical interpretation would be interesting 
 \subsection*{Acknowledgments}
    I am grateful to Steven Carlip, David Mc Gady, and Farhang Loran for reading the manuscript and Stevan Carlip, David Mc Gady, Nathan Benjamin and especially referee for their useful comments.

\end{document}